\documentclass[a4paper,aps,pra,showpacs,twocolumn,superscriptaddress]{revtex4-1}   
   
\usepackage[utf8]{inputenc}  
\usepackage[T1]{fontenc}     
\usepackage[british]{babel}  
\usepackage{lmodern}  

\usepackage[scaled=1.03]{inconsolata} 
\usepackage[usenames,dvipsnames]{color} 
\usepackage[colorlinks,citecolor=blue,linkcolor=magenta,urlcolor=blue]{hyperref}  
\usepackage{graphicx} 
\usepackage{tikz}
\usetikzlibrary{shapes.geometric,shadows.blur, decorations.pathreplacing, arrows, decorations, positioning, hobby, decorations.text}

\usepackage[babel]{microtype}  
\usepackage{amsmath,amssymb,amsthm,bm,mathtools,amsfonts,mathrsfs,bbm,dsfont} 
\usepackage{xspace}  
\usepackage{multirow}
\usepackage{verbatim}
\usepackage{enumerate}
\usepackage{tcolorbox}
\usepackage[normalem]{ulem}

\usepackage{physics}
\newcommand{\id}{\ensuremath{\mathds{1}}}
\usepackage{bbold}
\usepackage[mathscr]{eucal}

\newcommand{\Dyn}{\mathcal{D}}

\newcommand{\wit}{s_{\text{QM}}}

\newcommand{\kb}[2]{|#1\rangle\langle#2|} 
\newcommand{\Inst}{\mathcal{I}}

\usepackage{braket}

\def\i{\ensuremath{\mathrm{i}}}

\newcommand{\cpt}{\mathcal{E}}
\newcommand{\dynamics}[1]{(#1)}

\newtheorem{theorem}{Theorem}
\newtheorem{corollary}{Corollary}

\newcommand{\eoa}{\varepsilon^\sharp}
\newcommand{\eof}{\varepsilon}

\usepackage{bbold}

\newcommand{\cptext}{\cpt^*}
\newcommand{\cptu}{\cpt_u}

\newcommand{\cptls}{\cpt_{\text{LS}}}
\newcommand{\cptao}{\cpt_{\text{AO}}}
\newcommand{\cpthmr}{\cpt_{\text{HMR}}}
\newcommand{\cptd}{\cpt_{\text{D}}}

\newcommand{\sys}{\mathrm{sys}}
\newcommand{\env}{\mathrm{env}}

\usepackage{tikz}
\usetikzlibrary{patterns}
\usetikzlibrary{patterns.meta}

\usetikzlibrary{arrows.meta, shadings}

\usepackage{pgfplots}
\usepackage{tikz-3dplot}
\tdplotsetmaincoords{60}{115}
\tdplotsetrotatedcoords{25}{-25}{0}
\pgfplotsset{compat=newest}
\usetikzlibrary{arrows.meta, arrows, positioning}

\pgfplotsset{compat = newest}

\definecolor{mathematicablue}{rgb}{0.87,0.94,1}
\definecolor{mathematicadarkblue}{rgb}{0.368417, 0.506779, 0.709798}
\definecolor{mathematicaorange}{rgb}{1,0.9,0.8}
\definecolor{mathematicadarkorange}{rgb}{1,0.5,0}

\newcommand{\qubit}[2]{
	\filldraw[line width=1.5pt, rounded corners, fill=white,  draw=white] (#1, #2) ellipse (0.5 and 0.8) node {};
	\filldraw[line width=1.5pt, rounded corners, fill=CornflowerBlue!30!white,  draw=NavyBlue!75!black, blur shadow={shadow blur steps=5, shadow xshift=0.3mm, shadow yshift=-0.3mm}] (#1, #2) ellipse (0.5 and 0.8) node {};
    \draw[line width=1.5pt, NavyBlue!60!black] (#1-0.25,#2+0.25)--++(0.5,0);
    \draw[line width=1.5pt, NavyBlue!60!black] (#1-0.25,#2-0.25)--++(0.5,0);
}

\usepackage[dvipsnames]{xcolor}

\makeatletter
\def\maketitle{
\@author@finish
\title@column\titleblock@produce
\suppressfloats[t]}
\makeatother

\begin{document}

 \author{Charlotte Bäcker}
 \affiliation{Institute of Theoretical Physics, TUD Dresden University of Technology, 01062, Dresden, Germany}
 \author{Konstantin Beyer}
 \affiliation{Department of Physics, Stevens Institute of Technology, Hoboken, New Jersey 07030, USA}
 \author{Walter T. Strunz}
 \affiliation{Institute of Theoretical Physics, TUD Dresden University of Technology, 01062, Dresden, Germany}

\title{Quantum memory precludes mixed-unitary dynamics}
\date{\today}

\begin{abstract}
    Unital quantum channels, defined by their property of leaving the maximally mixed state invariant, form an important class of quantum operations. A distinguished subset of these channels can be represented as a probabilistic mixture of unitary evolutions.
    Characterizing channels that do not admit such a decomposition is in general a hard problem with significant implications for noise mitigation in quantum technologies and for fundamental problems in quantum information theory.
    Here we establish a link between mixed-unitarity of unital channels and the (quantum) nature of the memory effects in non-Markovian dynamics. 
    Translating the problem into the language of process tensors, this connection yields a hierarchy of semidefinite programs that provides numerically efficient witnesses for non-mixed-unitary behavior, outperforming existing criteria. We demonstrate the power of this approach through illustrative examples of unital channels in dimensions three and four.
\end{abstract}

\maketitle

\paragraph*{Introduction---}

Unital quantum channels, i.e., those that preserve the identity, \(\cptu[\id]=\id\), play a central role in many areas of quantum dynamics. They are relevant in resource theories~\cite{ChiGou2019ResourceTheories, GOUR20151, Streltsov_2018} and are closely connected to concepts such as majorization and fluctuation theorems in quantum thermodynamics~\cite{Watrous_2018, ContinuousMajorization2025, Rastegin_2013, AlbashFluctuation2013}. Furthermore, unital maps are crucial for noise modeling in the context of quantum computing. The dominant decoherence processes in quantum devices are often unital, as they occur on much shorter time scales than non-unital damping channels~\cite{laddQuantumComputers2010,suterColloquiumProtectingQuantum2016,paladino1NoiseImplications2014,krantz2019}.
A particular class of unital channels are those of mixed-unitary type (MU)
\begin{align}
    \cpt_\text{MU}[\rho] = \sum_i p_i U_i \rho U_i^\dagger,
\end{align}
with the $U_i$ being unitary and the $p_i$ forming a probability distribution.
Such channels describe unitary quantum dynamics with classical noise, and can be seen as the result of averaging over a Hamiltonian with a classical stochastic parameter, for example a random external field~\cite{alickiQuantumDynamicalSemigroups2007,HelStrPRA2010, BudiniRandomLindblad}.

Finite-dimensional unital channels form a convex and compact set~\cite{MenWol2009}.
Thus, they can always be decomposed as
\begin{align}
	\cptu = \sum_i p_i \cptext_i, && p_i \geq 0, && \sum_i p_i = 1,
\end{align}
where the \(\cptext_i\) are the extreme points of the convex set.
Any unitary is such an extreme point and for Hilbert spaces of dimension $d=2$ all of these extreme elements are unitary~\cite{LanStr1993}.
This implies that any unital channel on a qubit system 
is mixed-unitary.

It has long been known that this is no longer true for dimensions \(d\geq 3\), where unital channels exist that do not admit an MU representation~\cite{kummererEssentiallyCommutativeDilations1987, LanStr1993, ohnoMaximalRankExtremal2010, HelStr2009, haagerupExtremePointsFactorizability2021}. Various special classes of channels with this property have been proposed in the literature~\cite{MenWol2009, HelStr2009, watrousMixingDoublyStochastic2009, ohnoMaximalRankExtremal2010, haagerupExtremePointsFactorizability2021, rodriguez-ramosConvexCharacterisationSet2023, kribsOperatorAlgebraGeneralization2024a,kummererEssentiallyCommutativeDilations1987}, see also Fig.~\ref{fig:geometry-qutrit-channels}. 
\begin{figure}
    \centering
\begin{tikzpicture}[scale=0.65]
		\node [] (0) at (1, 2) {};
		\node [] (1) at (3, 0) {};
		\node [] (2) at (8, 0) {};
		\node [] (3) at (10, 2) {};
		\node [] (4) at (8, 4) {};
		\node [] (5) at (3, 4) {};
		\node [] (5a) at (4, 4) {};
		\node [] (6) at (2.75, 2) {};
		\node [] (7) at (4.75, 0) {};
		\node [] (8) at (8, 0) {};
		\node [] (9) at (10, 2) {};
		\node [] (10) at (8, 4) {};
		\node [] (11) at (4.75, 4) {};
		\node [] (12) at (6.5, 4) {};
		\node [] (13) at (5.25, 0) {};
		\node [] (14) at (5, 4) {};
		\node [] (15) at (3.5, 0.43) {};
		\node [] (16) at (2.8, 1.6) {};
		\node [] (17) at (7.5, 1.5) {};
		\node [] (22) at (5.5, 2) {};
		\draw[black, line width=1.5pt, fill=white, line join=round] [bend right=45] (5) to (0) [bend right=45] to (1) [bend right=0] to (2) [bend right=45] to (3) [bend right=45] to (4) [bend right=0] to (5)--cycle;
		\draw[Plum!80!black, line width=1.5pt, fill=Orchid!10!white, line join=round] [bend right=45] (11) to (6) [bend right=45] to (7) [bend right=0] to (2) [bend right=45] to (3) [bend right=45] to (10) [bend right=0] to (11)--cycle; 
		\draw[Goldenrod!80!black, line width=1.5pt, fill=Goldenrod!40!white,  line join=round] (13.center) [bend right=0] to (2) [bend right=45] to (3) [bend right=45] to (10) [bend right=0] to (12) [bend right=15] to (13)--cycle;      
        \draw[Plum!80!black, line width=1.5pt, line join=round] [bend right=45] (11) to (6) [bend right=45] to (7) [bend right=0] to (2) [bend right=45] to (3) [bend right=45] to (10) [bend right=0] to (11)--cycle;
        \foreach \c in {(14), (15), (16), (17), (5.3, 0)} \fill \c circle (0.75mm);
		\node [above right] (a) at (17) {$\cptd$};
		\node [above right] (b) at (2.8, 1.6) {$\cpthmr$};
		\node [below] (c) at (5, 4) {$\cptls$};
		\node [above right] (d) at (3.5, 0.43) {$\cptao$};
		\node [above right] (e) at (13) {$\id$};
        \draw [line width=.5pt,dash pattern=on 1pt off 2pt] (17) circle(0.9cm);
        
		\node [] (18) at (0.5, 2.75) {};
		\node [] (19) at (1.75, 4.25) {};
        \def\myshift#1{\raisebox{1ex}}
        \draw [thick,postaction={decorate,decoration={text along path,text align=center,text={|\sffamily\myshift|quantum channels}}}] (0) to [bend left=45] (5) to (5a);
		\node [] (20) at (3.5, 3.68) {};
        \draw [Plum!80!black,postaction={decorate,decoration={text along path,text align=center,text={|\sffamily\myshift|unital}}}] (6) [bend left=23] to (20);
		\node[fill=white,inner sep=1pt] (21) at (7.9, 2.85) {\sffamily{MU}};
		\node [fill=white,inner sep=1pt] (21) at (4.6, 2.85) {\sffamily{non-MU}};
\end{tikzpicture}
    \caption{Geometric representation of the space of quantum channels with the five unital channels investigated thoroughly in this Letter being highlighted. Two of them are mixed-unitary (MU) and the other three channels are non-mixed-unitary (non-MU). Note that this two-dimensional depiction of the high-dimensional space of quantum channels does not capture all relevant properties, although convexity properties are reflected correctly.}
    \label{fig:geometry-qutrit-channels}
\end{figure}
However, generally applicable tools for the detection of unital channels that are non-MU are lacking and a broader characterization of such channels remains elusive. In fact, deciding whether a given unital channel is MU is an NP-hard problem~\cite{leeDetectingMixedunitaryQuantum2020}.
Beyond mere academic interest in the features of unital quantum channels, this question is of crucial practical relevance. It has been shown that MU channels always admit an environment-assisted error correction scheme that reverses the channel~\cite{GreWer2003, TS2011EnvironmentAssistedCorrection}. This means that decoherence processes described by MU dynamics can, in principle, be mitigated perfectly. For unital channels that do not have an MU representation, such a correction scheme cannot exist~\cite{GreWer2003}. This is sometimes called \emph{truly quantum decoherence}, in contrast to a classical statistical mixture of unitary operations \cite{HelStr2009}.

Recent developments have shown that qutrits, ququarts, and even higher-dimensional quantum systems may be beneficial for quantum computing \cite{lanyonSimplifyingQuantumLogic2009, fedorovImplementationToffoliGate2012, gedikComputationalSpeedupSingle2015, ringbauerUniversalQuditQuantum2022, TanayTwoQutritAlgorithms2023, PavlidisQFTQudit2021, SeifertQuquarts2023}, which makes the distinction between MU and non-MU noise processes directly relevant for error models and mitigation strategies. Moreover, in qubit-based devices, noise processes that act jointly on multiple qubits become increasingly relevant as single-qubit noise levels improve.

In this Letter, we address the problem of detecting non-MU unital channels from a rather surprising perspective. We map this problem to a recently developed framework for quantum memory in non-Markovian quantum dynamics~\cite{BaeBeyStr2024}. The key insight is that since any MU dynamics can be reversed by an environment-assisted scheme, such a reversal never necessitates quantum memory, as we will show below. Accordingly, witnesses of quantum memory in non-Markovian dynamics can be directly used to demonstrate non-MU behavior for a given quantum channel.

Building upon existing criteria for quantum memory~\cite{YuOhsNguNim2025:p}, we provide numerical efficient methods, based on semidefinite programs (SDPs), for the detection of non-MU channels. 
We benchmark the method by showing that our witnesses outperform or match known criteria for various classes of qutrit non-MU channels in the literature. 
We then investigate a time-continuous joint dephasing process of a two-qubit system, showing that the unital evolution is non-MU at almost all times and therefore not correctable.

\paragraph*{Mapping to a quantum memory-related problem---}
The field of non-Markovian quantum dynamics has long been an area of intense research. Recently, especially the origin of the memory effects has attracted much interest~\cite{MilEglTarThePleSmiHue2020, GiaCos2021, BanMarHorHor2023, TarQuiMurMil2023:p, BaeBeyStr2024, BusGanGosBadPanMohDasBer2025,  goswamiHamiltonianCharacterizationMultitime2025, BaeBeyStr2025, YuOhsNguNim2025:p, BaeLinStr2025}.
Following the definition in Refs.~\cite{BaeBeyStr2024,YuOhsNguNim2025:p} a two-step dynamics $\Dyn=\dynamics{\cpt_1, \cpt_2}$ of two completely positive trace preserving maps (CPT maps) can be realized with a \emph{classical memory} if these maps can be decomposed as 
\begin{align}
    \label{eq:def_classicalmemory_instruments}
        \cpt_2 = \sum_\alpha \mathcal{F}_\alpha \circ \Inst_\alpha, \qquad \text{with } \sum_\alpha \Inst_\alpha = \cpt_1,
\end{align}
where $\mathcal{F}_\alpha$ is a family of CPT maps and the $\Inst_\alpha$ form a quantum instrument.
The reasoning behind this definition is that the dynamics can be decomposed into a measurement described by \(\Inst_\alpha\) for the first map \(\cpt_1\), followed by a CPT channel \(\mathcal{F_\alpha}\) that is conditioned on the \emph{classical} measurement outcome \(\alpha\).
If such a decomposition does not exist, the dynamics is said to require \emph{truly quantum memory} (see Ref.~\cite{BaeBeyStr2024} for details).
Based on this definition the following theorem can be formulated:
\begin{theorem}
	\label{th:identifynMU}
	A unital map $\cptu$ is mixed-unitary if and only if the dynamics $\Dyn=\dynamics{\cptu, \id}$ is realizable with classical memory.
\end{theorem}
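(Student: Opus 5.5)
We prove the two directions separately, using only the definition of classical memory in Eq.~\eqref{eq:def_classicalmemory_instruments} with $\cpt_1=\cptu$ and $\cpt_2=\id$ (the identity channel).

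\emph{Mixed-unitary implies classical memory.} Suppose $\cptu[\rho]=\sum_i p_i U_i\rho U_i^\dagger$. A naive instrument $\Inst_i[\rho]=p_iU_i\rho U_i^\dagger$ fails: it is not a legitimate instrument because its elements are not trace non-increasing in the required joint sense (one cannot learn $i$ from the system alone). Instead, we exploit the freedom in Eq.~\eqref{eq:def_classicalmemory_instruments}, where the $\Inst_\alpha$ need only sum to $\cptu$ and may be realized through an environment. We dilate $\cptu$ with an environment prepared in $\sum_i\sqrt{p_i}\ket{i}$ and the controlled unitary $\sum_i U_i\otimes\kb{i}{i}$, and then measure the environment in the computational basis with outcome $\alpha=i$. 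This gives a valid instrument $\Inst_i[\rho]=p_iU_i\rho U_i^\dagger$: each element is CP, their sum $\cptu$ is trace preserving, and an instrument requires nothing more. The earlier worry was therefore unfounded. We then choose $\mathcal{F}_i[\rho]=U_i^\dagger\rho U_i$, so that $\sum_i\mathcal{F}_i\circ\Inst_i=\sum_ip_i\,\mathrm{id}=\id$. This is precisely the environment-assisted correction of Refs.~\cite{GreWer2003}.

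\emph{Classical memory implies mixed-unitary.} Assume $\id=\sum_\alpha\mathcal{F}_\alpha\circ\Inst_\alpha$. We refine each $\Inst_\alpha$ into Kraus operators $K_{\alpha k}$ and each $\mathcal{F}_\alpha$ into Kraus operators $L_{\alpha l}$. The identity channel has a single Kraus operator $\id$, and every Kraus representation of it consists of multiples of $\id$ by the unitary freedom of Kraus representations. Hence $L_{\alpha l}K_{\alpha k}=c_{\alpha lk}\id$ with $\sum|c_{\alpha lk}|^2=1$. For fixed $\alpha$, the sum $\sum_{l,k}K_{\alpha k}^\dagger L_{\alpha l}^\dagger L_{\alpha l}K_{\alpha k}$ uses $\sum_l L^\dagger_{\alpha l}L_{\alpha l}=\id$ (since $\mathcal{F}_\alpha$ is trace preserving), giving $\sum_kK_{\alpha k}^\dagger K_{\alpha k}=q_\alpha\id$ with $q_\alpha=\sum_{l,k}|c_{\alpha lk}|^2$.

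The key step, which we expect to be the main obstacle, is to show that each $K_{\alpha k}$ is proportional to a unitary. For a fixed pair $(\alpha,k)$, pick some $l$ with $c_{\alpha lk}\neq0$. Then $K_{\alpha k}$ has a left inverse and is therefore invertible in finite dimension. Writing $K_{\alpha k}^\dagger K_{\alpha k}=\sum_l|c_{\alpha lk}|^2 (L_{\alpha l}^\dagger L_{\alpha l})^{-1}$ is awkward, so we use the following argument instead. Since $\sum_lL_{\alpha l}^\dagger L_{\alpha l}=\id$, we have
\begin{equation}
K_{\alpha k}^\dagger K_{\alpha k}=\sum_l K_{\alpha k}^\dagger L_{\alpha l}^\dagger L_{\alpha l}K_{\alpha k}=\sum_l|c_{\alpha lk}|^2\,\id .
\end{equation}
So $K_{\alpha k}=\sqrt{r_{\alpha k}}\,V_{\alpha k}$ with $V_{\alpha k}$ unitary and $r_{\alpha k}=\sum_l|c_{\alpha lk}|^2$. (If $r_{\alpha k}=0$, then $K_{\alpha k}=0$ and the term can be dropped.) Summing over $\alpha$ and $k$, $\sum r_{\alpha k}=\operatorname{tr}$-normalized to $1$ by trace preservation of $\cptu$. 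We conclude that $\cptu=\sum_\alpha\Inst_\alpha=\sum_{\alpha,k}r_{\alpha k}V_{\alpha k}\,\cdot\,V_{\alpha k}^\dagger$ is mixed-unitary.

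The delicate points are confirming that the instrument in the first direction is legitimate, and applying the Kraus-uniqueness statement for the identity channel correctly to the combined Kraus family $\{L_{\alpha l}K_{\alpha k}\}$. The latter is a valid Kraus decomposition of $\id$ because composition and summation of CP maps concatenate Kraus families.
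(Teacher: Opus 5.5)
Your proof is correct. It follows the paper's logic, but on the hard direction you argue from scratch where the paper argues by citation.

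\textbf{Mixed-unitary implies classical memory.} This direction is exactly the paper's: $\Inst_i=p_iU_i\cdot U_i^\dagger$ and $\mathcal{F}_i=U_i^\dagger\cdot U_i$. Your opening claim that this ``naive instrument fails'' is false and should be deleted rather than retracted later. Any family of CP maps summing to a CPT map is an instrument, so the dilation is unnecessary.

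\textbf{Classical memory implies mixed-unitary.} The paper does not argue directly here. It identifies the decomposition with a Gregoratti--Werner correction scheme with $T_{\mathrm{corr}}$ equal to the identity channel. It then cites the no-information-gain-without-disturbance theorem to conclude that each branch $\mathcal{F}_\alpha\circ\Inst_\alpha$ is proportional to the identity channel. Finally, it asserts in one line that $K_\alpha^\dagger K_\alpha=p_\alpha\id$.

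You prove both steps explicitly:
\begin{itemize}
\item The concatenated family $\{L_{\alpha l}K_{\alpha k}\}$ is a Kraus representation of the identity channel. By uniqueness of Kraus representations up to isometries (equivalently, its Choi matrix has rank one), every element is a multiple of $\id$. This is precisely the content of the theorem the paper cites.
\item Trace preservation of each $\mathcal{F}_\alpha$ then gives $K_{\alpha k}^\dagger K_{\alpha k}=\sum_l|c_{\alpha lk}|^2\id$ for every individual Kraus operator.
\end{itemize}

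Your route gives a self-contained argument. It also explicitly handles instrument elements $\Inst_\alpha$ with several Kraus operators, which the paper's step from ``each branch is proportional to the identity'' to $K_\alpha^\dagger K_\alpha=p_\alpha\id$ passes over. The paper's route, in exchange, exhibits the additional equivalence with environment-assisted correctability.

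\textbf{Cleanups.} The ``left inverse'' remark, the intermediate quantity $q_\alpha$, and the phrase ``$\operatorname{tr}$-normalized'' are superfluous. The normalization $\sum_{\alpha,k}r_{\alpha k}=1$ follows directly from $\sum_{\alpha,k}K_{\alpha k}^\dagger K_{\alpha k}=\id$.
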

\paragraph*{Proof}
For the special case of a dynamics \(\Dyn=\dynamics{\cpt_1, \cpt_2}\) with \(\cpt_1 = \cptu\) and \(\cpt_2 = \id\), the definition of classical memory in Eq.~\eqref{eq:def_classicalmemory_instruments} reduces to a form that is identical to the existence of an environment-assisted error correction scheme in the sense of Ref.~\cite{GreWer2003} whose equivalence to \(\cptu\) being MU has been established in said reference (see App.~\ref{app:proof_of_th1} for details and App.~\ref{app:corrolary} for a corollary). \(\blacksquare\)

\bigskip

Intuitively, the connection can be understood by the fact that if $\cpt_1$ is a mixture of unitary channels then the $\mathcal{F}_\alpha$ can be chosen to be the corresponding inverse unitary operations, thus restoring the initial quantum state. 
A correction scheme as in Ref.~\cite{GreWer2003} exists exactly then when the measurement outcomes can be used to perfectly reverse the channel \(\cpt\), i.e.\ if the second map in the dynamics is the identity.

Having established the equivalence between \(\cptu\) being non-MU and \(\Dyn = (\cptu,\id)\) being a dynamics that requires quantum memory, we can employ the existing criteria for the latter to demonstrate the former. 
The Choi state \(E\) of a completely positive map \(\cpt\) is defined through the Choi-Jamiołkowski isomorphism 
\begin{align}
\label{eq:choi-isomorphism}
    E = (\id \otimes \cpt)\kb{\Phi^+}{\Phi^+}, 
\end{align}
with the unnormalized Bell state \(\ket{\Phi^+} = \sum_{k=0}^{d-1} \ket{k}\otimes\ket{k}\).
In Ref.~\cite{BaeBeyStr2024} it was shown that a dynamics \(\Dyn=(\cpt_1,\cpt_2)\) requires quantum memory if 
\begin{align}
\label{eq:entanglement-crit}
    \eoa[E_1] < \eof[E_2], 
\end{align}
where \(\eoa\) and \(\eof\) denote the entanglement of assistance~\cite{DiVFucMabSmoThaUhl1999, LauVerEnk2002} and entanglement of formation~\cite{Wootters1997, Wootters1998}, respectively, and \(E_{1,2}\) are the Choi states of the maps \(\cpt_{1,2}\).
We note that for the special case \(\cpt_2 = \id\), considered here, the criterion Eq.~\eqref{eq:entanglement-crit} becomes tight (see App.~\ref{app:tight}).

The entanglement of assistance is an NP-hard quantity~\cite{leeDetectingMixedunitaryQuantum2020} and thus the quantum memory criterion in Eq.~\eqref{eq:entanglement-crit} is not computationally helpful to detect non-MU channels.
However, the equivalence established in Th.~\ref{th:identifynMU}
now allows us to apply all the tools developed in the context of quantum memory theory. In particular, in Ref.~\cite{YuOhsNguNim2025:p} the definition of classical memory has been relaxed to an SDP which  serves as a witness for quantum memory. It hence eliminates the possibility of a representation with classical memory and can be used to identify non-MU channels. This is very similar to approaches certifying entanglement by using the convexity of the class of separable states \cite{Peres-Horodecki, dohertyDistinguishingSeparableEntangled2002, dps_pra}.

\paragraph*{SDP non-MU witnesses---}
A dynamics \(\Dyn=(\cpt_1,\cpt_2)\) can always be regarded as emerging from a process tensor in Choi representation \( X^{AB'BC'}\), which can be interpreted as a four-partite quantum state \cite{pollockNonMarkovianQuantumProcesses2018}. Here the Hilbert space \(A\) describes the input at time \(t_0\), \(B'\) is the output Hilbert space at time \(t_1\), \(B\) is the input Hilbert space for the second time step, and \(C'\) the output Hilbert space at time \(t_2\).
For \(X\) to be compatible with the dynamics, it must reproduce the correct Choi states \(E_{1,2}\) of the maps \(\cpt_{1,2}\):
\begin{gather}
    \tr_{C'}[X^{AB'BC'}] = E_1 \otimes \id_B, \\
    \bra{\Phi^+_{B'B}} X^{AB'BC'} \ket{\Phi^+_{B'B}} = E_2.
\end{gather}
In general, there are infinitely many \(X\) that yield the dynamics \(\Dyn=(\cpt_1,\cpt_2)\). 
However, if a dynamics admits a classical memory decomposition as in Eq.~\eqref{eq:def_classicalmemory_instruments}, 
there is at least one separable realization of the form~\cite{YuOhsNguNim2025:p}
\begin{align}
    X^{AB'BC'} = \sum_\alpha I_\alpha^{AB'} \otimes F_\alpha^{BC'}, 
\end{align}
where \(I_\alpha\) and \(F_\alpha\) are the Choi states of \(\Inst_\alpha\) and \(\mathcal{F}_\alpha\), respectively [see Eq.~\eqref{eq:choi-isomorphism}].

Accordingly, if a unital map \(\cptu\) is MU, all process tensors that are compatible with the dynamics \(\Dyn = (\cptu, \id)\) must be separable with respect to the partition \(AB'|BC'\).
Since separability is an NP-hard problem as well \cite{Gharibian_2010}, this framing of the problem does not simplify the problem directly.
However, based on the Doherty-Parrilo-Spedalieri (DPS) hierarchy~\cite{dohertyDistinguishingSeparableEntangled2002, dps_pra}, the question of separability can be cast into a hierarchy of semidefinite programs (SDP). 
The first level of this hierarchy is the Peres-Horodecki or positive-partial-transpose (PPT) criterion \cite{Peres-Horodecki} verifying entanglement, which was used in Ref.~\cite{YuOhsNguNim2025:p} for the case of quantum memory detection.
Adapted for our case of non-MU detection we can formulate the following SDP:
\begin{align}
    \wit = \max_X &\left(\frac{1}{d^2}\bra{\Phi^+} G \ket{\Phi^+} - 1 \right),  \label{eq:objective} \\
  \text{s.t.:\quad }      & X^{AB'BC'} \succeq 0, \label{eq:positivity-of-X}\\
    & (X^{AB'BC'})^{\top_{BC'}} \succeq 0, \label{eq:PPT-constraint}\\
        & \tr_{C'}[X_{AB'BC'}] = E_u \otimes \id_B, \label{eq:first-marginal} \\
    &G = \bra{\Phi^+_{B'B}} X^{AB'BC'} \ket{\Phi^+_{B'B}} \label{eq:second:marginal}.
\end{align}
The constraints describe the positivity of \(X\)~\eqref{eq:positivity-of-X} and its partial transpose~\eqref{eq:PPT-constraint}, the marginal for the first channel which has to agree with the unital channel \(\cptu\) whose non-MU property we are interested~\eqref{eq:first-marginal}, and the second marginal which we denote by \(G\)~\eqref{eq:second:marginal}.
The fidelity of this marginal \(G\) with the expected Choi state of the identity map \(E_2 = \kb{\Phi^+}{\Phi^+}\) is maximized in the objective function of the SDP~\eqref{eq:objective} (note the scaling by \(d^2\) due to the unnormalized Choi states). 
If the maximum \(\wit\) is negative, the dynamics \(\Dyn = (\cptu, \id)\) cannot be realized with a separable process tensor \(X\), hence, \(\cptu\) is of non-MU form. 
Thus, \(\wit\) is a non-MU witness.

The PPT  constraint in Eq.~\eqref{eq:PPT-constraint} leads to a sufficient but not necessary witness. 
Stronger witnesses can be obtained by adding further PPT constraints for symmetric extensions of the Hilbert space of \(X\) to the SDP. In theory such a DPS hierarchy is guaranteed to converge in the limit of infinite depth~\cite{dohertyDistinguishingSeparableEntangled2002}. 

In practice, even small depths are computationally costly. However, already the first order, given above, often provides a strong witness.
In the following sections we analyze the performance of this non-MU witness on different exemplary classes of unital channels.
Most examples are concerned with the qutrit case \(d=3\) as the lowest dimension in which the question of MU vs.\ non-MU channels is non-trivial. 
We compare the results with previously reported criteria in the literature, showing that our witness can detect the non-MU behavior in significantly larger parameter regions (see Fig.~\ref{fig:relations_channels} for an overview).
In App.~\ref{app:relaxed-witness} we provide further witnesses that are in general weaker than Eq.~\eqref{eq:objective} but can be numerically more efficient and adapted to measurable observables in an experimental setup.

\paragraph*{The Landau-Streater channel ($d=3$)---}
It is known that unital qutrit dynamics of Kraus rank $r=2$ are always MU.
For channels of rank $r\geq 3$ this is no longer true.
The prototypical qutrit channel which does not allow for a representation in terms of a mixture of unitary dynamics is the so-called \emph{Landau-Streater} (LS) channel $\cpt_{LS}$, given in terms of the following Kraus representation~\cite{kummererEssentiallyCommutativeDilations1987, LanStr1993}
\begin{align}
	\label{eq:ls_qutrit}
    K_1 &= \ket{1}\bra{2}-\ket{2}\bra{1},\qquad 
    K_2 = \ket{0}\bra{2}-\ket{2}\bra{0}, \notag\\
    K_3 &= \ket{0}\bra{1}-\ket{1}\bra{0}.
\end{align}
Mixing this channel with the identity channel, we define a family of channels
\begin{align}
    \label{eq:family_ls_id}
    \cpt_p = p \id + (1-p) \cpt_{LS}, \quad p\in[0,1].
\end{align}
In Ref.~\cite{kribsOperatorAlgebraGeneralization2024a} it was shown analytically that this family possesses a MU representation only in the trivial case $p=1$.
This result serves as a first benchmark for our quantum memory-based non-MU witness. 
In Fig.~\ref{fig:family_ed_id.pdf} we plot the the optimized fidelity \(\wit\) [see Eq.~\eqref{eq:objective})] as a function of the mixing parameter \(p\) in Eq.~\eqref{eq:family_ls_id}, showing that the witness correctly detects the non-MU behavior for all \(p<1\). 

Previously known criteria for the non-MU nature of unital channels usually only apply to a certain specific class with high symmetry. 
One generally applicable witness for non-MU qutrit channels was proposed by Mendl and Wolf~\cite{MenWol2009}. They show that a unital qutrit channel \(\cptu\) is non-MU if its Choi state \(E_u\) satisfies
\begin{align}
\label{eq:mendlwolf-Z}
    s_\mathrm{MW} = \tr[W E_u] < 0,
\end{align}
with
\begin{align}
    \label{eq:mendlwolf}
    W = \frac{1}{3} (3\mathbb{F}  + \id),&&  \mathbb{F}: \ket{k,l} \mapsto \ket{l,k}.
\end{align}

\begin{figure}
    \centering
    \includegraphics[width=0.8\linewidth]{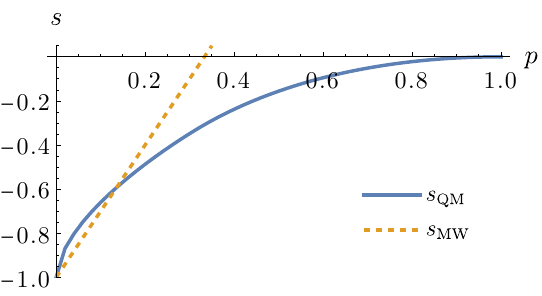}
    \caption{Witnesses of non-mixed-unitaryness in the family of CPT maps given by Eq.~\eqref{eq:family_ls_id}. The blue solid line represents the quantum-memory-based witness $\wit$ and indicates non-mixed-unitaryness if it is negative. The orange dashed line computed the witness $s_\mathrm{MW}$ according Eq.~\eqref{eq:mendlwolf-Z} indicates non-mixed-unitaryness if the value is below zero, which is the case for all $p<1/3$. Both witnesses are normalized to $s=-1$ at $p=0$.}
    \label{fig:family_ed_id.pdf}
\end{figure}
The Mendl-Wolf criterion serves as numerical benchmark for our witness \(\wit\). We plot \(s_\mathrm{MW}\) in Fig.~\ref{fig:family_ed_id.pdf} for comparison. The Mendl-Wolf witness detects the true nature of the family only for $p < 1/3$. Thus, our witness performs  significantly better for the channels $\cpt_p$ in Eq.~\eqref{eq:family_ls_id}. 

\bigskip

\paragraph*{Convex mixtures of extremal channels---}
For the 80-dimensional space of qutrit channels it is known that unital channels form a convex subset and therein mixed-unitary channels form another convex subset of non-zero volume~\cite{watrousMixingDoublyStochastic2009}.
One important observation is the fact that the subset of non-MU channels is not convex (see App.~\ref{app:non-convex}).

In order to demonstrate the versatility of the quantum memory-based approach to the certification of non-MU channels, we analyze pairwise convex combinations of five different unital qutrit channels.
Two of them, the identity map $\id$ and the fully depolarizing channel $\cptd$, are of MU form. The other three are known to be non-MU: the Landau-Streater-channel $\cptls$ \cite{LanStr1993, kummererEssentiallyCommutativeDilations1987}, the Arveson-Ohno-Channel $\cptao$ \cite{ohnoMaximalRankExtremal2010, haagerupExtremePointsFactorizability2021}, and one channel $\cpthmr$ of the Haagerup-Musat-Ruskai family \cite{haagerupExtremePointsFactorizability2021}.
Apart from the depolarizing channel \(\cptd\), all of them are extreme points of the set of unital channels (see Fig.~\ref{fig:geometry-qutrit-channels}). 
Details on the definition and further properties of these channels can be found in App.~\ref{app:details}.

For most of those convex combinations no analytical results exist and we hence compare the performance of our witness \(\wit\) in~\eqref{eq:objective} with the witness \(s_\text{MW}\) from Eq.~\eqref{eq:mendlwolf}.
The results are visualized in Fig.~\ref{fig:relations_channels}. 
Both witnesses detect the same parameter range for the non-MU behavior of the channel $\cptls$ mixed with a fully depolarizing channel. 
For all other convex combinations, the quantum-memory-based witness \(\wit\) is able to demonstrate the non-MU property in a larger parameter region than the witness~\(s_\text{MW}\).

\begin{figure}
    \centering\begin{tikzpicture}[scale=0.37]
		\node (0) at (0, 6) [label=left:$\cpthmr$] {};
		\node (1) at (6, 0) [label=below:$\cptao$] {};
		\node (2) at (15.5, 0) [label=below:$\id$] {};
		\node (3) at (18.5, 8) [label=right:$\cptd$] {};
		\node (4) at (7, 11) [label=above:$\cptls$] {};
		\draw[gray, dashed] (0.center) to (4.center);
		\draw[gray, dashed] (4.center) to (3.center);
		\draw[gray, dashed] (3.center) to (2.center);
		\draw[gray, dashed] (2.center) to (1.center);
		\draw[gray, dashed] (1.center) to (0.center);
		\draw[gray, dashed] (0.center) to (2.center);
		\draw[gray, dashed] (0.center) to (3.center);
		\draw[gray, dashed] (3.center) to (1.center);
		\draw[gray, dashed] (1.center) to (4.center);
		\draw[gray, dashed] (4.center) to (2.center);
        
        \draw[mathematicadarkorange!60!white,-, line width=6pt] (4.center) -- ($(4.center)!0.33!(2.center)$);
        \draw[mathematicadarkblue,-, line width=1.5pt] (4.center) -- ($(4.center)!1!(2.center)$);
        \draw[mathematicadarkorange!60!white,-, line width=6pt] (4.center) -- ($(4.center)!0.5!(3.center)$);
        \draw[mathematicadarkblue,-, line width=1.5pt] (4.center) -- ($(4.center)!0.5!(3.center)$);
        \draw[mathematicadarkorange!60!white,-, line width=6pt] (4.center) -- ($(4.center)!0.51!(1.center)$);
        \draw[mathematicadarkblue,-, line width=1.5pt] (4.center) -- ($(4.center)!1!(1.center)$);
        \draw[mathematicadarkorange!60!white,-, line width=6pt] (4.center) -- ($(4.center)!0.63!(0.center)$);
        \draw[mathematicadarkblue,-, line width=1.5pt] (4.center) -- ($(4.center)!0.82!(0.center)$);
        
        \draw[mathematicadarkblue,-, line width=1.5pt] (0.center) -- ($(0.center)!0.11!(4.center)$);
        \draw[mathematicadarkblue,-, line width=1.5pt] (1.center) -- ($(1.center)!0.08!(2.center)$);
        \draw[mathematicadarkblue,-, line width=1.5pt] (1.center) -- ($(1.center)!0.18!(3.center)$);
        \draw[mathematicadarkblue,-, line width=1.5pt] (1.center) -- ($(1.center)!0.71!(0.center)$);
        \draw[mathematicadarkblue,-, line width=1.5pt] (0.center) -- ($(0.center)!0.12!(1.center)$);
        \draw[mathematicadarkblue,-, line width=1.5pt] (0.center) -- ($(0.center)!0.06!(3.center)$);
        \draw[mathematicadarkblue,-, line width=1.5pt] (0.center) -- ($(0.center)!1!(2.center)$);
        \draw[gray!50!white, line width=2pt] (3.center) -- ($(3.center)!0.125!(0.center)$);
        \draw[gray!50!white, line width=2pt] (3.center) -- ($(3.center)!0.125!(1.center)$);
        \draw[gray!50!white, line width=2pt] (3.center) -- ($(3.center)!1!(2.center)$);
        \draw[gray!50!white, line width=2pt] (3.center) -- ($(3.center)!0.125!(4.center)$);

		\node[circle, draw=Plum!80!black, fill=Orchid!10!white, line width=2pt] (0) at (0, 6) {};
		\node[circle, draw=Plum!80!black, fill=Orchid!10!white, line width=2pt] (1) at (6, 0) {};
		\node[circle, draw=Goldenrod!80!black, fill=Goldenrod!80!white, line width=2pt] (2) at (15.5, 0) {};
		\node[circle, draw=Goldenrod!80!black, fill=Goldenrod!80!white, line width=2pt] (3) at (18.5, 8) {};
		\node[circle, draw=Plum!80!black, fill=Orchid!10!white, line width=2pt] (4) at (7, 11) {};
        
		\node (5) at (0, -3) {};
		\node (6) at (2, -3) {};
		\node (7) at (5, -3) {};
		\node (8) at (7, -3) {};
		\node (9) at (10, -3) {};
		\node (10) at (12, -3) {};
		\node (11) at (15, -3) {};
		\node (12) at (17, -3) {};
		\draw[mathematicadarkblue,-, line width=1.5pt] (5.center) -- (6.center)  node[midway, below, black] {QM};
		\draw[mathematicadarkorange!60!white,-, line width=6pt] (7.center) -- (8.center)node[midway, below, black] {MW};
		\draw[gray!50!white, line width=2pt] (9.center) -- (10.center) node[midway, below, black] {MU};
		\draw[gray, dashed] (11.center) -- (12.center) node[midway, below, black] {?};
\end{tikzpicture}
    \caption{Overview of the performance of the quantum memory witness (QM) and the Mendl-Wolf witness (MW) for convex combinations of different unital channels. The three channels $\cptls$, $\cptao$ and $\cpthmr$ are verifiably non-MU while the fully depolarizing channel $\cptd$ and the identity channel $\id$ have known MU decompositions. It can be seen that for the convex combination of $\cptls$ and $\cptd$  the QM witness (blue thin line) performs equally well as the Mendl-Wolf-witness (orange thick line), while it outperforms the latter one in every other of the investigated convex combinations. The solid gray lines indicate parameter ranges for which it is known that the channel is MU \cite{watrousMixingDoublyStochastic2009} while dashed gray lines indicate parameter ranges for which nothing can be concluded from the witnesses. The details and  parameter ranges depicted in this figure can be found in App.~\ref{app:performance} in Tab.~\ref{tab:overview_performance}.}
    \label{fig:relations_channels}
\end{figure}

\paragraph*{Genuine quantum decoherence of a four-level system---}
For a single qubit every unital channel is MU and therefore qubit decoherence processes can in principle always be corrected by a measure and feedback scheme on the environment~\cite{GreWer2003}. 
This is no longer true for a multi-qubit system that couples jointly to its surrounding.
Here we analyze the non-MU detection of our quantum-memory based witness for the prototypical case of this class: a two-qubit system that experiences a non-Markovian dephasing dynamics. 
A model that produces such an evolution is shown in Fig.~\ref{fig:two-qubit-dynamics}. 
The two system qubits couple to a single environmental qubit which in turn is damped by a Markovian channel.

The global evolution of an initial product state $\rho(0) = \rho_\sys(0)\otimes \rho_\env(0)$ 
is determined by the three-qubit Hamiltonian \(H = H_\text{int} + H_\text{env}\) with
\begin{align}
    \label{eq:twoqubit-unital}
    H_\mathrm{int} &= \kappa_1 (\sigma_z^{\sys,1} \otimes \sigma_z^\env) +\kappa_2 (\sigma_z^{\sys,2} \otimes \sigma_z^\env)
\end{align}
and \(H_\mathrm{env} = \vec{\sigma} \vec{\Gamma}\).
Here, the $\kappa_i$ are the interaction strengths between the single system qubit \(i\) and the environmental qubit. The $\Gamma_i\in \mathbb{R}$ determine the direction of the Pauli noise in terms of the Pauli vector $\vec{\sigma} =(\sigma_x, \sigma_y, \sigma_z)$ on the environmental qubit.
We can add an irreversibility of strength $\gamma$ by coupling the environmental qubit to a bath via a GKSL master equation \cite{GorKosSud1976, Lin1976} with Lindblad operator $L=\sigma_-=\ket{0}\bra{1}$.
The dynamics is thus determined by
\begin{align}
    \label{eq:twoqubit-environment}
    \dot{\rho}&=-\i \comm{H}{\rho}
    + \mathcal{L}\left[\rho\right], 
\end{align}
with \(\mathcal{L}\left[\rho\right] = \gamma\left[L \rho L^\dagger/2 - \left(L^\dagger L \rho + \rho L^\dagger L \right) \right]\). Tracing over the environmental qubit, this leads to a 
non-Markovian unital decoherence process of the two-qubit system.

\begin{figure}
    \begin{tikzpicture}[scale=0.8]
        \fill[line width=1.5pt, rounded corners, fill=Goldenrod!10!white, draw=Orange!85!black] (-1.2, -2.25) rectangle ++(9,4);
        \filldraw[line width=1.5pt, rounded corners, fill=white, draw=NavyBlue!75!black, fill=CornflowerBlue!5!white, dashed, dash pattern=on 6pt off 6pt] (-0.9, -2) rectangle ++(3.5,3.5);
        \qubit{0}{0.3}
        \qubit{1.4}{-1}
        \qubit{5.5}{0.1}
        \node[] at (1.8, 1.2) {system};
        \node[] at (6, 1.3) {environment};
        \draw[NavyBlue!30!black, line width=1pt, Latex-Latex] (0.5,0.3) -- node[below] {$\quad\kappa_1$} (5,0.3);
        \draw[NavyBlue!30!black, line width=1pt,Latex-Latex] (1.9,-1) -- node[below] {$\kappa_2$} (5,-0.2);
        \draw[NavyBlue!30!black, line width=1pt,decorate,decoration={snake,amplitude=3pt,pre length=2pt,post length=0pt}] (5.8,-0.55) -- node[above] {$\; \gamma$} (7.3,-1.8);
    \end{tikzpicture}
    \caption{Setting to realize irreversible unital two-qubit dynamics. The system-environment interaction is described by Eqs.~\eqref{eq:twoqubit-unital} and \eqref{eq:twoqubit-environment}.}
    \label{fig:two-qubit-dynamics}
\end{figure}
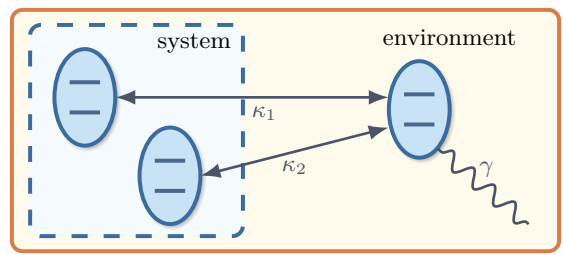

We can now investigate this model with respect to the property of being non-MU.
Note that an extension of the Mendl-Wolf criterion in Eq.~\eqref{eq:mendlwolf} to even dimensions leads to a trivial witness \(s_{\text{MW}}\)~\cite{MenWol2009} and is, thus, not applicable to the case \(d = 4\) considered here. 

Our witness in Eq.~\eqref{eq:objective} does detect the non-MU nature of the joint two-qubit dynamics. The results for different damping strengths $\gamma$ are shown in Fig.~\ref{fig:two-qubit-model-memory}.
\begin{figure}
    \centering
    \includegraphics[width=\linewidth]{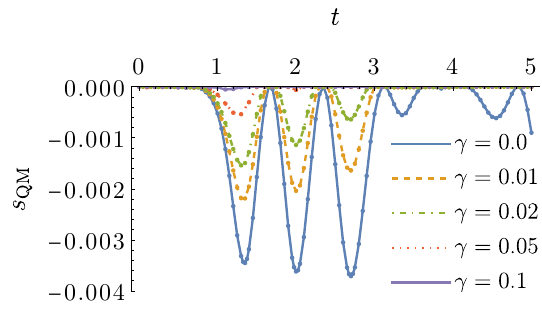}
    \caption{Non-MU witness from Eq.~\eqref{eq:objective} applied to the dynamics in Fig.~\ref{fig:two-qubit-dynamics}.  The parameters $\kappa_2=0.4$, $\kappa_2=1.2$, $\Gamma_x=0.4$, $\Gamma_y=0$ and $\Gamma_z=1.0$ are chosen such that the dynamics has no mixed-unitary representation for almost all times \cite{HelStr2009}.}
    \label{fig:two-qubit-model-memory}
\end{figure}
The SDP to witness non-MU dynamics certifies that for large parameter ranges this model has no mixed-unitary representation. 
In contrast to the advances on witnessing the truly quantum nature of decoherence processes in $d=4$ from for example Ref.~\cite{HelStr2009} the quantum memory-based witness features multiple advantages. It does neither rely on knowledge of the exact dynamics of the environment, nor is it limited to the particular Kraus rank  $r=2$. Furthermore, the SDP ensures that the computation is not as numerically involved as previous approaches.
Therefore, the quantum-memory-based formulation using a numerically efficient semidefinite program is a very helpful tool in the context of ququarts.
Note that for $s=0$, however, we cannot make statements about the nature of the noise, the dynamics could or could not be mixed-unitary.
Numerically involved further considerations of the next steps in the DPS-hierarchy \cite{dohertyDistinguishingSeparableEntangled2002, dps_pra} could refine these results.

\paragraph*{Conclusion---}
Determining whether a unital quantum channel admits a mixed-unitary representation is a problem of both fundamental and practical importance, yet remains hard in general. In this Letter we have presented an efficient method to exclude the existence of mixed-unitary representations by establishing a fundamental connection to quantum memory criteria in non-Markovian dynamics. Mapping the problem to the identification of separable process tensors, we obtain a hierarchy of semidefinite programs that, in principle, fully characterizes the set of mixed-unitary channels. Already the lowest level of this hierarchy yields strong witnesses, as demonstrated for multiple prototypical classes of unital qutrit channels, where our approach compares favorably to existing methods. We further showed that for a time-continuous joint dephasing dynamics of a two-qubit system, the witness detects non-mixed-unitary behavior at almost all times.

The connection between quantum memory and mixed-unitarity extends well beyond the specific witnesses constructed here. Any advance in detecting quantum memory---an active area of research---translates directly into new tools for characterizing unital channels, making the framework naturally extensible. From a practical standpoint, the witnesses are readily adaptable to experimental constraints, and can often be tailored to measurement settings available in a given setup. As truly quantum decoherence processes described by non-mixed-unitary channels represent a central challenge for noise mitigation in emerging quantum technologies, we expect our framework to find broad application in the characterization and benchmarking of realistic quantum devices.

\section*{Acknowledgements}
C.\,B.~acknowledges support by the German Academic Scholarship Foundation. K.\,B. acknowledges support by the
NSF under award No.~2239498, NASA under award No.~80NSSC25K7051 and the Sloan Foundation under award No.~G-2023-21102.

\bibliography{literature}

@article{GreWer2003,
  author =       {Gregoratti, M. and Werner, R. F.},
  title =        {Quantum Lost and Found},
  journal =      {J. Mod. Opt.},
  year =         2003,
  volume =       50,
  pages =        {915--933},
  month =        apr,
  number =       {6-7},
  issn =         {0950-0340, 1362-3044},
  doi =          {10.1080/09500340308234541},
  url =          {http://www.tandfonline.com/doi/abs/10.1080/09500340308234541},
  langid =       {english},
}

@article{leeDetectingMixedunitaryQuantum2020,
  title = {Detecting Mixed-Unitary Quantum Channels Is {{NP-hard}}},
  author = {Lee, Colin Do-Yan and Watrous, John},
  year = 2020,
  month = apr,
  journal = {Quantum},
  volume = {4},
  eprint = {1902.03164},
  primaryclass = {},
  pages = {253},
  issn = {2521-327X},
  doi = {10.22331/q-2020-04-16-253},
  url = {http://arxiv.org/abs/1902.03164},
  urldate = {2025-04-02},
  archiveprefix = {},
  langid = {english}
}

@article{BaeBeyStr2024,
  author =       {B{\"a}cker, Charlotte and Beyer, Konstantin and Strunz,
                  Walter T.},
  title =        {Local Disclosure of Quantum Memory in Non-{Markov}ian
                  Dynamics},
  journal =      {Phys. Rev. Lett.},
  year =         2024,
  volume =       132,
  pages =        060402,
  month =        feb,
  number =       6,
  publisher =    {{American Physical Society}},
  doi =          {10.1103/PhysRevLett.132.060402},
  url =          {https://link.aps.org/doi/10.1103/PhysRevLett.132.060402},
  urldate =      {2024-02-09},
}

@article{LanStr1993,
  author =       {Landau, L. J. and Streater, R. F.},
  title =        {On {Birkhoff}'s Theorem for Doubly Stochastic Completely
                  Positive Maps of Matrix Algebras},
  journal =      {Lin. Alg. Appl.},
  year =         1993,
  volume =       193,
  pages =        {107--127},
  month =        nov,
  issn =         {0024-3795},
  doi =          {10.1016/0024-3795(93)90274-R},
  url =          {https://www.sciencedirect.com/science/article/pii/002437959390274R},
  urldate =      {2023-04-29},
  langid =       {english}
}

@article{ohnoMaximalRankExtremal2010,
  title = {Maximal Rank of Extremal Marginal Tracial States},
  author = {Ohno, Hiromichi},
  year = 2010,
  month = sep,
  journal = {J. Math. Phys.},
  volume = {51},
  number = {9},
  eprint = {},
  primaryclass = {},
  pages = {092101},
  issn = {0022-2488, 1089-7658},
  doi = {10.1063/1.3481567},
  url = {http://arxiv.org/abs/0911.3459},
  archiveprefix = {},
  langid = {english},
}

@article{haagerupExtremePointsFactorizability2021,
  title = {Extreme {{Points}} and {{Factorizability}} for {{New Classes}} of {{Unital Quantum Channels}}},
  author = {Haagerup, Uffe and Musat, Magdalena and Ruskai, Mary Beth},
  year = 2021,
  month = oct,
  journal = {Annales Henri Poincar\'e},
  volume = {22},
  number = {10},
  eprint = {},
  primaryclass = {},
  pages = {3455--3496},
  issn = {1424-0637, 1424-0661},
  doi = {10.1007/s00023-021-01071-y},
  url = {http://arxiv.org/abs/2006.03414},
  archiveprefix = {},
  langid = {english},
}

@article{HelStr2009,
  author =       {Helm, Julius and Strunz, Walter T.},
  title =        {Quantum Decoherence of Two Qubits},
  journal =      {Phys. Rev. A},
  year =         2009,
  volume =       80,
  pages =        042108,
  num_pages =    5,
  month =        oct,
  number =       4,
  publisher =    {{American Physical Society}},
  doi =          {10.1103/PhysRevA.80.042108},
  url =          {https://link.aps.org/doi/10.1103/PhysRevA.80.042108},
  urldate =      {2024-02-10}
}

@article{kummererEssentiallyCommutativeDilations1987,
  title = {The Essentially Commutative Dilations of Dynamical Semigroups {{onMn}}},
  author = {K{\"u}mmerer, Burkhard and Maassen, Hans},
  year = 1987,
  month = mar,
  journal = {Communications in Mathematical Physics},
  volume = {109},
  number = {1},
  pages = {1--22},
  issn = {1432-0916},
  doi = {10.1007/BF01205670},
  url = {https://doi.org/10.1007/BF01205670},
  urldate = {2025-10-12},
  langid = {english}
}

@article{kribsOperatorAlgebraGeneralization2024a,
  title = {Operator Algebra Generalization of a Theorem of {{Watrous}} and Mixed Unitary Quantum Channels},
  author = {Kribs, David W and Levick, Jeremy and Pereira, Rajesh and Rahaman, Mizanur},
  year = 2024,
  month = mar,
  journal = {J. Phys. A-Math.},
  volume = {57},
  number = {11},
  pages = {115303},
  publisher = {IOP Publishing},
  issn = {1751-8121},
  doi = {10.1088/1751-8121/ad2cb0},
  url = {https://doi.org/10.1088/1751-8121/ad2cb0},
  urldate = {2025-10-24},
  langid = {english},
  }

@article{MenWol2009,
  author =       {Mendl, Christian B. and Wolf, Michael M.},
  title =        {Unital Quantum Channels -- Convex Structure and Revivals of
                  {Birkhoff}'s Theorem},
  journal =      {Commun. Math. Phys.},
  year =         2009,
  volume =       289,
  pages =        {1057--1086},
  month =        aug,
  number =       3,
  issn =         {1432-0916},
  doi =          {10.1007/s00220-009-0824-2},
  url =          {https://doi.org/10.1007/s00220-009-0824-2},
  urldate =      {2023-01-25},
  langid =       {english},
  filename =     {MenWol2009.pdf},
}

@article{watrousMixingDoublyStochastic2009,
  title = {Mixing Doubly Stochastic Quantum Channels with the Completely Depolarizing Channel},
  author = {Watrous, J.},
  year = 2009,
  month = may,
  journal = {Quantum Inf. Comput.},
  volume = {9},
  number = {5\&6},
  pages = {406--413},
  issn = {15337146, 15337146},
  doi = {10.26421/QIC9.5-6-4},
  url = {http://www.rintonpress.com/journals/doi/QIC9.5-6-4.html},
  urldate = {2025-11-26},
  langid = {english},
 }

@article{rodriguez-ramosConvexCharacterisationSet2023,
  title = {On the Convex Characterisation of the Set of Unital Quantum Channels},
  author = {{Rodriguez-Ramos}, Constantino and Wilmott, Colin M},
  year = 2023,
  month = oct,
  journal = {J. Phys. A-Math.},
  volume = {56},
  number = {45},
  pages = {455308},
  publisher = {IOP Publishing},
  issn = {1751-8121},
  doi = {10.1088/1751-8121/acfddb},
  url = {https://doi.org/10.1088/1751-8121/acfddb},
  urldate = {2025-09-30},
  langid = {english},
  }

@article{AudSch2008,
  author =       {Audenaert, Koenraad M. R. and Scheel, Stefan},
  title =        {On Random Unitary Channels},
  journal =      {New J. Phys.},
  year =         2008,
  volume =       10,
  note =         { },
  pages =        023011,
  note2 =        { },
  num_pages =    22,
  month =        feb,
  number =       2,
  publisher =    {{IOP Publishing}},
  issn =         {1367-2630},
  doi =          {10.1088/1367-2630/10/2/023011},
  url =          {https://doi.org/10.1088/1367-2630/10/2/023011},
  urldate =      {2022-04-10},
  langid =       {english},
  filename =     {AudSch2008.pdf},
}

@article{GiaCos2021,
  author =       {Giarmatzi, Christina and Costa, Fabio},
  title =        {Witnessing Quantum Memory in Non-{Markov}ian Processes},
  journal =      {Quantum},
  year =         2021,
  volume =       5, 
  pages =        440, 
  num_pages =    12,
  month =        apr,
  publisher =    {{Verein zur F{\"o}rderung des Open Access Publizierens in den
                  Quantenwissenschaften}},
  doi =          {10.22331/q-2021-04-26-440},
  url =          {https://quantum-journal.org/papers/q-2021-04-26-440/},
  urldate =      {2024-01-15},
  langid =       {british},
  filename =     {GiaCos2021.pdf},
}

@article{BaeLinStr2025,
  title = {Verifying quantum memory in the dynamics of spin boson models},
  author = {Bäcker, Charlotte and Link, Valentin and Strunz, Walter T.},
  journal = {Phys. Rev. Res.},
  pages = {--},
  year = {2025},
  month = {Nov},
  publisher = {American Physical Society},
  doi = {10.1103/5bfc-znkj},
  url = {https://link.aps.org/doi/10.1103/5bfc-znkj}
}

@article{BaeBeyStr2025,
  title = {Entropic witness for quantum memory in open system dynamics},
  author = {Bäcker, Charlotte and Beyer, Konstantin and Strunz, Walter T.},
  journal = {Phys. Rev. Res.},
  volume = {7},
  issue = {3},
  pages = {033256},
  numpages = {10},
  year = {2025},
  month = {Sep},
  publisher = {American Physical Society},
  doi = {10.1103/618n-fp8w},
  url = {https://link.aps.org/doi/10.1103/618n-fp8w}
}

@article{MilEglTarThePleSmiHue2020,
  author =       {Milz, Simon and Egloff, Dario and Taranto, Philip and
                  Theurer, Thomas and Plenio, Martin B. and Smirne, Andrea and
                  Huelga, Susana F.},
  title =        {When Is a Non-{Markov}ian Quantum Process Classical?},
  journal =      {Phys. Rev. X},
  year =         2020,
  volume =       10,
  num_pages =    42,
  pages =        {041049},
  month =        dec,
  number =       4,
  publisher =    {{American Physical Society}},
  doi =          {10.1103/PhysRevX.10.041049},
  url =          {https://link.aps.org/doi/10.1103/PhysRevX.10.041049},
  urldate =      {2022-06-26},
}

@article{BanMarHorHor2023,
  author =       {Banacki, Micha{\l} and Marciniak, Marcin and Horodecki, Karol
                  and Horodecki, Pawe{\l}},
  title =        {Information Backflow May Not Indicate Quantum Memory},
  journal =      {Phys. Rev. A},
  year =         2023,
  volume =       107,
  note =         { },
  pages =        032202,
  note2 =        { },
  num_pages =    7,
  month =        mar,
  number =       3,
  publisher =    {American Physical Society},
  doi =          {10.1103/PhysRevA.107.032202},
  url =          {https://link.aps.org/doi/10.1103/PhysRevA.107.032202},
  urldate =      {2024-11-21},
}

@article{TarQuiMurMil2023:p,
  author =       {Taranto, Philip and Quintino, Marco T{\'u}lio and Murao, Mio
                  and Milz, Simon},
  title =        {Characterising the Hierarchy of Multi-time Quantum Processes
                  with Classical Memory},
  journal =      {Quantum},
  year =         2024,
  volume =       8,
  pages =        1328,
  num_pages =    21,
  month =        may,
  publisher =    {Verein zur F{\"o}rderung des Open Access Publizierens in den
                  Quantenwissenschaften},
  doi =          {10.22331/q-2024-05-02-1328},
  url =          {https://quantum-journal.org/papers/q-2024-05-02-1328/},
}

@article{YuOhsNguNim2025:p,
      title={Quantum memory in spontaneous emission processes}, 
      author={Mei Yu and Ties-A. Ohst and Hai-Chau Nguyen and Stefan Nimmrichter},
      year={2025},
      eprint={2504.08605},
      archivePrefix={arXiv},
      primaryClass={quant-ph},
      url={https://arxiv.org/abs/2504.08605}, 
}

@article{BusGanGosBadPanMohDasBer2025,
  author =       {Buscemi, Francesco and Gangwar, Rajeev and Goswami,
                  Kaumudibikash and Badhani, Himanshu and Pandit, Tanmoy and
                  Mohan, Brij and Das, Siddhartha and Bera, Manabendra Nath},
  title =        {Causal and Noncausal Revivals of Information: A New Regime of
                  Non-{Markov}ianity in Quantum Stochastic Processes},
  journal =      {PRX Quantum},
  year =         2025,
  volume =       6,
  note =         { },
  pages =        020316,
  note2 =        { },
  num_pages =    12,
  shorttitle =   {Causal and {{Noncausal Revivals}} of {{Information}}},
  month =        apr,
  number =       2,
  publisher =    {American Physical Society},
  doi =          {10.1103/PRXQuantum.6.020316},
  url =          {https://link.aps.org/doi/10.1103/PRXQuantum.6.020316},
  urldate =      {2025-04-24},
}

@inproceedings{DiVFucMabSmoThaUhl1999,
  author =       {DiVincenzo, David P. and Fuchs, Christopher A. and Mabuchi,
                  Hideo and Smolin, John A. and Thapliyal, Ashish and Uhlmann,
                  Armin},
  title =        {Entanglement of Assistance},
  year =         1999,
  pages =        {247--257},
  booktitle =    {Quantum {{Computing}} and {{Quantum Communications}}},
  editor =       {Williams, Colin P.},
  series =       {Lecture {{Notes}} in {{Computer Science}}},
  publisher =    {{Springer}},
  address =      {{Berlin, Heidelberg}},
  doi =          {10.1007/3-540-49208-9_21},
  isbn =         {978-3-540-49208-5},
  langid =       {english},
  filename =     {DiVFucMabSmoThaUhl1999.pdf},
}

@article{LauVerEnk2002,
  author =       {Laustsen, T. and Verstraete, F. and Enk, S. V.},
  title =        {Local vs. Joint Measurements for the Entanglement of
                  Assistance},
  journal =      {Quantum Inf. Comput.},
  year =         2002,
  volume =       {3},
  num_pages =    20,
  month =        jun,
  url =          {https://www.semanticscholar.org/paper/Local-vs.-joint-measurements-for-the-entanglement-Laustsen-Verstraete/e7ea4fee81fd3429c2d417dc3f43c437e68967fe},
  urldate =      {2023-01-27},
  filename =     {LauVerEnk2002.pdf},
}

@article{dohertyDistinguishingSeparableEntangled2002,
    title = {Distinguishing {Separable} and {Entangled} {States}},
    volume = {88},
    copyright = {http://link.aps.org/licenses/aps-default-license},
    issn = {0031-9007, 1079-7114},
    url = {https://link.aps.org/doi/10.1103/PhysRevLett.88.187904},
    doi = {10.1103/PhysRevLett.88.187904},
    language = {en},
    number = {18},
    urldate = {2025-08-13},
    journal = {Phys. Rev. Lett.},
    author = {Doherty, A. C. and Parrilo, Pablo A. and Spedalieri, Federico M.},
    month = apr,
    year = {2002},
    pages = {187904},
}

@article{dps_pra,
  title = {Complete family of separability criteria},
  author = {Doherty, Andrew C. and Parrilo, Pablo A. and Spedalieri, Federico M.},
  journal = {Phys. Rev. A},
  volume = {69},
  issue = {2},
  pages = {022308},
  numpages = {20},
  year = {2004},
  month = {Feb},
  publisher = {American Physical Society},
  doi = {10.1103/PhysRevA.69.022308},
  url = {https://link.aps.org/doi/10.1103/PhysRevA.69.022308}
}

@article{Peres-Horodecki,
  title = {Peres-Horodecki Separability Criterion for Continuous Variable Systems},
  author = {Simon, R.},
  journal = {Phys. Rev. Lett.},
  volume = {84},
  issue = {12},
  pages = {2726--2729},
  numpages = {0},
  year = {2000},
  month = {Mar},
  publisher = {American Physical Society},
  doi = {10.1103/PhysRevLett.84.2726},
  url = {https://link.aps.org/doi/10.1103/PhysRevLett.84.2726}
}

@article{HelStrPRA2010,
  title = {Decoherence and entanglement dynamics in fluctuating fields},
  author = {Helm, Julius and Strunz, Walter T.},
  journal = {Phys. Rev. A},
  volume = {81},
  issue = {4},
  pages = {042314},
  numpages = {9},
  year = {2010},
  month = {Apr},
  publisher = {American Physical Society},
  doi = {10.1103/PhysRevA.81.042314},
  url = {https://link.aps.org/doi/10.1103/PhysRevA.81.042314}
}

@article{BudiniRandomLindblad,
  title = {Random Lindblad equations from complex environments},
  author = {Budini, Adri\'an A.},
  journal = {Phys. Rev. E},
  volume = {72},
  issue = {5},
  pages = {056106},
  numpages = {11},
  year = {2005},
  month = {Nov},
  publisher = {American Physical Society},
  doi = {10.1103/PhysRevE.72.056106},
  url = {https://link.aps.org/doi/10.1103/PhysRevE.72.056106}
}

@article{ChiGou2019ResourceTheories,
  title = {Quantum resource theories},
  author = {Chitambar, Eric and Gour, Gilad},
  journal = {Rev. Mod. Phys.},
  volume = {91},
  issue = {2},
  pages = {025001},
  numpages = {48},
  year = {2019},
  month = {Apr},
  publisher = {American Physical Society},
  doi = {10.1103/RevModPhys.91.025001},
  url = {https://link.aps.org/doi/10.1103/RevModPhys.91.025001}
}

@article{GOUR20151,
title = {The resource theory of informational nonequilibrium in thermodynamics},
journal = {Phys. Rep.},
volume = {583},
pages = {1-58},
year = {2015},
note = {},
issn = {0370-1573},
doi = {https://doi.org/10.1016/j.physrep.2015.04.003},
url = {https://www.sciencedirect.com/science/article/pii/S037015731500229X},
author = {Gilad Gour and Markus P. Müller and Varun Narasimhachar and Robert W. Spekkens and Nicole {Yunger Halpern}},
}

@article{Streltsov_2018,
doi = {10.1088/1367-2630/aac484},
url = {https://doi.org/10.1088/1367-2630/aac484},
year = {2018},
month = {may},
publisher = {IOP Publishing},
volume = {20},
number = {5},
pages = {053058},
author = {Streltsov, Alexander and Kampermann, Hermann and Wölk, Sabine and Gessner, Manuel and Bruß, Dagmar},
title = {Maximal coherence and the resource theory of purity},
journal = {New. J. Phys.},
}

@inbook{Watrous_2018, 
place={Cambridge}, 
title={Unital Channels and Majorization}, 
booktitle={The Theory of Quantum Information}, 
publisher={Cambridge University Press},
author={Watrous, John}, 
year={2018}, 
pages={201–249}}

@article{ContinuousMajorization2025,
  title = {Continuous majorization in quantum phase space for Wigner-positive states and proposals for Wigner-negative states},
  author = {de Boer, Jan and Di Giulio, Giuseppe and Keski-Vakkuri, Esko and Tonni, Erik},
  journal = {Phys. Rev. A},
  volume = {112},
  issue = {3},
  pages = {032405},
  numpages = {30},
  year = {2025},
  month = {Sep},
  publisher = {American Physical Society},
  doi = {10.1103/w561-h3z5},
  url = {https://link.aps.org/doi/10.1103/w561-h3z5}
}

@article{Rastegin_2013,
doi = {10.1088/1742-5468/2013/06/P06016},
url = {https://doi.org/10.1088/1742-5468/2013/06/P06016},
year = {2013},
month = {jun},
publisher = {IOP Publishing and SISSA},
volume = {2013},
number = {06},
pages = {P06016},
author = {Rastegin, Alexey E},
title = {Non-equilibrium equalities with unital quantum channels},
journal = {J. Stat. Mech.: Theory Exp.},
}

@article{AlbashFluctuation2013,
  title = {Fluctuation theorems for quantum processes},
  author = {Albash, Tameem and Lidar, Daniel A. and Marvian, Milad and Zanardi, Paolo},
  journal = {Phys. Rev. E},
  volume = {88},
  issue = {3},
  pages = {032146},
  numpages = {14},
  year = {2013},
  month = {Sep},
  publisher = {American Physical Society},
  doi = {10.1103/PhysRevE.88.032146},
  url = {https://link.aps.org/doi/10.1103/PhysRevE.88.032146}
}

@article{Wootters1998,
  title = {Entanglement of Formation of an Arbitrary State of Two Qubits},
  author = {Wootters, William K.},
  journal = {Phys. Rev. Lett.},
  volume = {80},
  issue = {10},
  pages = {2245--2248},
  numpages = {0},
  year = {1998},
  month = {Mar},
  publisher = {American Physical Society},
  doi = {10.1103/PhysRevLett.80.2245},
  url= {https://link.aps.org/doi/10.1103/PhysRevLett.80.2245}
}

@article{Wootters1997,
  title = {Entanglement of a Pair of Quantum Bits},
  author = {Hill, Sam A. and Wootters, William K.},
  journal = {Phys. Rev. Lett.},
  volume = {78},
  issue = {26},
  pages = {5022--5025},
  numpages = {0},
  year = {1997},
  month = {Jun},
  publisher = {American Physical Society},
  doi = {10.1103/PhysRevLett.78.5022},
  url = {https://link.aps.org/doi/10.1103/PhysRevLett.78.5022}
}

@article{Gharibian_2010, 
title={Strong NP-hardness of the quantum separability problem}, 
volume={10}, 
number={3{\ & }4}, 
journal={Quantum Inf. \& Comput.}, 
author={Gharibian, Sevag}, 
year={2010}, 
pages={343–360} }

@article{KarimipourLSandWH,
  title = {Noisy Landau-Streater and Werner-Holevo channels in arbitrary dimensions},
  author = {Karimipour, Vahid},
  journal = {Phys. Rev. A},
  volume = {110},
  issue = {2},
  pages = {022424},
  numpages = {12},
  year = {2024},
  month = {Aug},
  publisher = {American Physical Society},
  doi = {10.1103/PhysRevA.110.022424},
  url = {https://link.aps.org/doi/10.1103/PhysRevA.110.022424}
}

@article{TanayTwoQutritAlgorithms2023,
  title = {Two-Qutrit Quantum Algorithms on a Programmable Superconducting Processor},
  author = {Roy, Tanay and Li, Ziqian and Kapit, Eliot and Schuster, DavidI.},
  journal = {Phys. Rev. Appl.},
  volume = {19},
  issue = {6},
  pages = {064024},
  numpages = {18},
  year = {2023},
  month = {Jun},
  publisher = {American Physical Society},
  doi = {10.1103/PhysRevApplied.19.064024},
  url = {https://link.aps.org/doi/10.1103/PhysRevApplied.19.064024}
}

@article{fedorovImplementationToffoliGate2012,
  title = {Implementation of a {{Toffoli}} Gate with Superconducting Circuits},
  author = {Fedorov, A. and Steffen, L. and Baur, M. and {da Silva}, M. P. and Wallraff, A.},
  year = 2012,
  month = jan,
  journal = {Nature},
  volume = {481},
  number = {7380},
  pages = {170--172},
  publisher = {Nature Publishing Group},
  issn = {1476-4687},
  doi = {10.1038/nature10713},
  url = {https://www.nature.com/articles/nature10713},
  copyright = {2011 Springer Nature Limited},
  langid = {english},
}

@article{ringbauerUniversalQuditQuantum2022,
  title = {A Universal Qudit Quantum Processor with Trapped Ions},
  author = {Ringbauer, Martin and Meth, Michael and Postler, Lukas and Stricker, Roman and Blatt, Rainer and Schindler, Philipp and Monz, Thomas},
  year = 2022,
  month = sep,
  journal = {Nature Physics},
  volume = {18},
  number = {9},
  pages = {1053--1057},
  publisher = {Nature Publishing Group},
  issn = {1745-2481},
  doi = {10.1038/s41567-022-01658-0},
  url = {https://www.nature.com/articles/s41567-022-01658-0},
  copyright = {2022 The Author(s), under exclusive licence to Springer Nature Limited},
  langid = {english},
}

@article{lanyonSimplifyingQuantumLogic2009,
  title = {Simplifying Quantum Logic Using Higher-Dimensional {{Hilbert}} Spaces},
  author = {Lanyon, Benjamin P. and Barbieri, Marco and Almeida, Marcelo P. and Jennewein, Thomas and Ralph, Timothy C. and Resch, Kevin J. and Pryde, Geoff J. and O'Brien, Jeremy L. and Gilchrist, Alexei and White, Andrew G.},
  year = 2009,
  month = feb,
  journal = {Nature Physics},
  volume = {5},
  number = {2},
  pages = {134--140},
  publisher = {Nature Publishing Group},
  issn = {1745-2481},
  doi = {10.1038/nphys1150},
  url = {https://www.nature.com/articles/nphys1150},
  copyright = {2009 Springer Nature Limited},
  langid = {english},
}

@article{PavlidisQFTQudit2021,
  title = {Quantum-Fourier-transform-based quantum arithmetic with qudits},
  author = {Pavlidis, Archimedes and Floratos, Emmanuel},
  journal = {Phys. Rev. A},
  volume = {103},
  issue = {3},
  pages = {032417},
  numpages = {12},
  year = {2021},
  month = {Mar},
  publisher = {American Physical Society},
  doi = {10.1103/PhysRevA.103.032417},
  url = {https://link.aps.org/doi/10.1103/PhysRevA.103.032417}
}

@article{gedikComputationalSpeedupSingle2015,
  title = {Computational Speed-up with a Single Qudit},
  author = {Gedik, Z. and Silva, I. A. and {\c C}akmak, B. and Karpat, G. and Vidoto, E. L. G. and {Soares-Pinto}, D. O. and {deAzevedo}, E. R. and Fanchini, F. F.},
  year = 2015,
  month = oct,
  journal = {Scientific Reports},
  volume = {5},
  number = {1},
  pages = {14671},
  publisher = {Nature Publishing Group},
  issn = {2045-2322},
  doi = {10.1038/srep14671},
  url = {https://www.nature.com/articles/srep14671},
  urldate = {2026-03-15},
  copyright = {2015 The Author(s)},
  langid = {english}
}

@article{TS2011EnvironmentAssistedCorrection,
  title = {Environment-assisted error correction of single-qubit phase damping},
  author = {Trendelkamp-Schroer, Benjamin and Helm, Julius and Strunz, Walter T.},
  journal = {Phys. Rev. A},
  volume = {84},
  issue = {6},
  pages = {062314},
  numpages = {7},
  year = {2011},
  month = {Dec},
  publisher = {American Physical Society},
  doi = {10.1103/PhysRevA.84.062314},
  url = {https://link.aps.org/doi/10.1103/PhysRevA.84.062314}
}

@article{suterColloquiumProtectingQuantum2016,
  title = {Colloquium: {{Protecting}} quantum information against environmental noise},
  shorttitle = {Colloquium},
  author = {Suter, Dieter and {\'A}lvarez, Gonzalo A.},
  year = 2016,
  month = oct,
  journal = {Reviews of Modern Physics},
  volume = {88},
  number = {4},
  pages = {041001},
  publisher = {American Physical Society},
  doi = {10.1103/RevModPhys.88.041001},
  urldate = {2026-03-16}
}

@article{laddQuantumComputers2010,
  title = {Quantum computers},
  author = {Ladd, T. D. and Jelezko, F. and Laflamme, R. and Nakamura, Y. and Monroe, C. and O'Brien, J. L.},
  year = 2010,
  month = mar,
  journal = {Nature},
  volume = {464},
  number = {7285},
  pages = {45--53},
  publisher = {Nature Publishing Group},
  issn = {1476-4687},
  doi = {10.1038/nature08812},
  urldate = {2026-03-16},
  copyright = {2010 Macmillan Publishers Limited. All rights reserved}
}

@article{krantz2019,
  title = {A quantum engineer's guide to superconducting qubits},
  author = {Krantz, P. and Kjaergaard, M. and Yan, F. and Orlando, T. P. and Gustavsson, S. and Oliver, W. D.},
  year = 2019,
  month = jun,
  journal = {Applied Physics Reviews},
  volume = {6},
  number = {2},
  pages = {021318},
  issn = {1931-9401},
  doi = {10.1063/1.5089550},
  urldate = {2026-03-16}
}

@article{paladino1NoiseImplications2014,
  title = {1 / f noise: {{Implications}} for solid-state quantum information},
  shorttitle = {1 / f noise},
  author = {Paladino, E. and Galperin, Y. M. and Falci, G. and Altshuler, B. L.},
  year = 2014,
  month = apr,
  journal = {Reviews of Modern Physics},
  volume = {86},
  number = {2},
  pages = {361--418},
  issn = {0034-6861, 1539-0756},
  doi = {10.1103/RevModPhys.86.361},
  urldate = {2026-03-16},
  copyright = {http://link.aps.org/licenses/aps-default-license}
}

@article{goswamiHamiltonianCharacterizationMultitime2025,
  title = {Hamiltonian Characterization of Multi-Time Processes with Classical Memory},
  author = {Goswami, Kaumudibikash and Roy, Abhinash Kumar and Srivastava, Varun and Perez, Barr and Giarmatzi, Christina and Gilchrist, Alexei and Costa, Fabio},
  year = 2025,
  month = nov,
  journal = {New Journal of Physics},
  volume = {27},
  number = {11},
  pages = {114515},
  publisher = {IOP Publishing},
  issn = {1367-2630},
  doi = {10.1088/1367-2630/ae1c64},
  url = {https://doi.org/10.1088/1367-2630/ae1c64},
  langid = {english}
  }

@book{alickiQuantumDynamicalSemigroups2007,
  title = {Quantum {{Dynamical Semigroups}} and {{Applications}}},
  author = {Alicki, Robert},
  year = 2007,
  series = {Lecture {{Notes}} in {{Physics Ser}}},
  edition = {2nd ed},
  number = {v.717},
  publisher = {Springer, Berlin, Heidelberg},
  address = {},
  collaborator = {Lendi, K.},
  isbn = {978-3-540-70860-5 978-3-540-70861-2}
}

@article{pollockNonMarkovianQuantumProcesses2018,
  title = {Non-{{Markovian}} quantum processes: {{Complete}} framework and efficient characterization},
  shorttitle = {Non-{{Markovian}} quantum processes},
  author = {Pollock, Felix A. and {Rodr{\'i}guez-Rosario}, C{\'e}sar and Frauenheim, Thomas and Paternostro, Mauro and Modi, Kavan},
  year = 2018,
  month = jan,
  journal = {Physical Review A},
  volume = {97},
  number = {1},
  pages = {012127},
  issn = {2469-9926, 2469-9934},
  doi = {10.1103/PhysRevA.97.012127},
  urldate = {2023-05-11}
}

@article{SeifertQuquarts2023,
  title = {Exploring ququart computation on a transmon using optimal control},
  author = {Seifert, Lennart Maximilian and Li, Ziqian and Roy, Tanay and Schuster, David I. and Chong, Frederic T. and Baker, Jonathan M.},
  journal = {Phys. Rev. A},
  volume = {108},
  issue = {6},
  pages = {062609},
  numpages = {13},
  year = {2023},
  month = {Dec},
  publisher = {American Physical Society},
  doi = {10.1103/PhysRevA.108.062609},
  url = {https://link.aps.org/doi/10.1103/PhysRevA.108.062609}
}

@article{GorKosSud1976,
  author =       {Gorini, Vittorio and Kossakowski, Andrzej and Sudarshan,
                  E. C. G.},
  title =        {Completely Positive Dynamical Semigroups of N-level Systems},
  journal =      {J. Math. Phys.},
  year =         1976,
  volume =       17,
  pages =        {821--825},
  month =        may,
  number =       5,
  publisher =    {{American Institute of Physics}},
  issn =         {0022-2488},
  doi =          {10.1063/1.522979},
  url =          {https://aip.scitation.org/doi/abs/10.1063/1.522979}
  }

@article{Lin1976,
  author =       {Lindblad, G.},
  title =        {On the Generators of Quantum Dynamical Semigroups},
  journal =      {Commun. Math. Phys.},
  year =         1976,
  volume =       48,
  pages =        {119--130},
  month =        jun,
  number =       2,
  issn =         {1432-0916},
  doi =          {10.1007/BF01608499},
  url =          {https://doi.org/10.1007/BF01608499},
  urldate =      {2023-04-10},
  langid =       {english}
}

\clearpage

\appendix

\section{Proof of Th.~\ref{th:identifynMU}}
\label{app:proof_of_th1}
\begin{proof}
For the details on the proof of Th.~\ref{th:identifynMU}, let us first compare the definition of a dynamics $\Dyn=\dynamics{\cpt, \cpt_2}$ with CPT maps $\cpt, \cpt_2$ having classical memory with the definition of a correction scheme restoring quantum information for a map $T$. Such a scheme leading to the corrected channel $T_{\mathrm{corr}}$ is defined as \cite{GreWer2003}
    	\begin{align}
    		\label{eq:def_correctionscheme}
    		T_{\mathrm{corr}} = \sum_\alpha R_\alpha \circ T_\alpha, \qquad \sum_\alpha T_\alpha = T,
    	\end{align}
    	where the $R_\alpha$ represent a family of CPT maps.
    	The similarity between Eq.~\eqref{eq:def_classicalmemory_instruments} and Eq.~\eqref{eq:def_correctionscheme} becomes obvious once one identifies
    	\begin{align}
    		\label{eq:cm_gw_identification}
    		\cpt\equiv T, \quad  \cpt_2 \equiv T_{\mathrm{corr}},\quad \mathcal{I}_\alpha \equiv T_\alpha, \quad \mathcal{F}_\alpha \equiv R_\alpha.
    	\end{align}
    	There is hence a one-to-one correspondence between dynamics with classical memory and those with correction schemes.
We will now show that $\cpt$ being MU is the same as  $\Dyn=\dynamics{\cpt, \id}$ having classical memory which again is equivalent to the existence of a scheme restoring quantum information for $\cpt$.

Given some dynamics $\Dyn=\dynamics{\cpt, \id}$ has classical memory, according to the identifications in Eq.~\eqref{eq:cm_gw_identification} this implies that $\cpt_2=T_\mathrm{corr}=\id$. In Ref.~\cite{GreWer2003} a correction scheme restoring quantum information is defined exactly by the fact that $T_\mathrm{corr}=\id$.

Suppose now there is a scheme correcting quantum information for $\cpt$, so $T_\mathrm{corr}=\id$. Then, due to the \emph{no information gain without disturbance}-theorem, every single term in the first sum in Eq.~\eqref{eq:def_correctionscheme} has to be proportional to the identity \cite{GreWer2003}. This can only be satisfied if $K_\alpha^\dagger K_\alpha=p_\alpha \id$ such that the map has to be a convex mixture of unitary dynamics and is hence of the MU type.

Now, finally assume $\cpt$ is MU such that the Kraus operators are proportional to $U_\alpha$.
Choosing $\mathcal{F}_\alpha\left[\rho\right] = U_\alpha^\dagger \rho U_\alpha$ we find
		\begin{align}
			\sum_\alpha \mathcal{F}_\alpha \circ I_\alpha = \sum_\alpha p_\alpha U_\alpha^\dagger \left(U_\alpha \rho U_\alpha^\dagger\right) U_\alpha = \rho,
		\end{align}
where we used that the $U_\alpha$ are unitary and the $p_\alpha$ sum up to one.
It hence follows that for a mixed-unitary map $\cpt$ the dynamics $\Dyn=\dynamics{\cpt, \id}$ has classical memory \cite{BaeBeyStr2024, GreWer2003}.
Hence, the three statements above are equivalent and the theorem is proven.
\end{proof}

\section{Corollary}
\label{app:corrolary}

Note that Th.~\ref{th:identifynMU} is equivalent to the following Corollary:
\begin{corollary}
    \label{th:corollary}
    A unital map $\cptu$ is mixed-unitary if and only if the dynamics $\Dyn=\dynamics{\cptu, \cpt_0}$ has classical memory for all choices of $\cpt_0$.
\end{corollary}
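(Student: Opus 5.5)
The corollary is proved by reducing it to Th.~\ref{th:identifynMU} in both directions. The only extra ingredient is that classical memory is stable under post-processing of the second map. Here $\cpt_0$ ranges over all CPT maps, and $\id$ is one such choice.

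\emph{``If'' direction.} Assume $\Dyn=\dynamics{\cptu,\cpt_0}$ has classical memory for every $\cpt_0$. In particular this holds for $\cpt_0=\id$, so Th.~\ref{th:identifynMU} immediately gives that $\cptu$ is mixed-unitary.

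\emph{``Only if'' direction.} Assume $\cptu$ is MU, $\cptu[\rho]=\sum_\alpha p_\alpha U_\alpha\rho U_\alpha^\dagger$. By Th.~\ref{th:identifynMU} (equivalently, the explicit construction in App.~\ref{app:proof_of_th1}), there is an instrument $\Inst_\alpha[\rho]=p_\alpha U_\alpha\rho U_\alpha^\dagger$ with $\sum_\alpha\Inst_\alpha=\cptu$. There are also CPT maps $\mathcal{F}_\alpha[\rho]=U_\alpha^\dagger\rho U_\alpha$ such that $\sum_\alpha\mathcal{F}_\alpha\circ\Inst_\alpha=\id$.

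Now take an arbitrary CPT map $\cpt_0$ and define $\mathcal{F}'_\alpha=\cpt_0\circ\mathcal{F}_\alpha$. Each $\mathcal{F}'_\alpha$ is CPT, being a composition of CPT maps. By linearity of $\cpt_0$,
\begin{align}
\sum_\alpha \mathcal{F}'_\alpha\circ\Inst_\alpha=\cpt_0\circ\Big(\sum_\alpha\mathcal{F}_\alpha\circ\Inst_\alpha\Big)=\cpt_0\circ\id=\cpt_0 .
\end{align}
The instrument $\Inst_\alpha$ still sums to $\cptu$. Hence Eq.~\eqref{eq:def_classicalmemory_instruments} holds with $\cpt_1=\cptu$ and $\cpt_2=\cpt_0$, so $\Dyn=\dynamics{\cptu,\cpt_0}$ has classical memory for all $\cpt_0$.

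\emph{Where care is needed.} Nothing in the argument is genuinely difficult. The point to get right is the ``equivalent to Th.~\ref{th:identifynMU}'' phrasing: the corollary's right-hand side must be shown equivalent to the theorem's right-hand side, not merely implied by MU-ness. This requires the general observation that if $\dynamics{\cpt_1,\id}$ has classical memory, then so does $\dynamics{\cpt_1,\cpt_0}$ for every $\cpt_0$, by post-composing the $\mathcal{F}_\alpha$ with $\cpt_0$. Conversely, the universal statement specializes to $\cpt_0=\id$. Both steps are exactly those above, using only that post-composition preserves complete positivity and trace preservation and commutes with the sum over $\alpha$.
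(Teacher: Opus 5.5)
Your proof is correct and follows the paper's argument. One direction specializes to $\cpt_0=\id$ and invokes Th.~\ref{th:identifynMU}; the other post-composes the correction maps, $\tilde{\mathcal{F}}_\alpha=\cpt_0\circ\mathcal{F}_\alpha$, and uses linearity to obtain $\sum_\alpha\tilde{\mathcal{F}}_\alpha\circ\Inst_\alpha=\cpt_0$. The only cosmetic difference is that you start from the explicit unitary decomposition, while the paper starts from an arbitrary classical-memory realization of $\dynamics{\cpt,\id}$, a point your closing remark also covers.
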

\begin{proof}
	$\Rightarrow$: This is trivial since  $\id$ is one particular choice for $\cpt_0$.\\
	$\Leftarrow$: This is also quite intuitive. Suppose $\Dyn=(\cpt, \id)$ has classical memory. Then the dynamics $\Dyn = \dynamics{\cpt, \cpt_0}$ also only requires classical memory.
	\begin{align}
		\cpt_0 &= \cpt_0 \circ \id\\
			&= \cpt_0\left[\sum_{i=1}^{r} \mathcal{F}_i \left[K_i \rho K_i^\dagger\right]\right]= \sum_{i=1}^{r} \tilde{\mathcal{F}}_i \left[K_i \rho K_i^\dagger\right].
	\end{align}
	where we used the linearity of CPT maps and defined $\tilde{\mathcal{F}}_i = \cpt_0 \circ \mathcal{F}_i$. This result is precisely the definition of classical memory and thus we have shown that if the identity can be reached with classical memory no further memory is necessary to reach any other map $\cpt_0$.
\end{proof}

\section{Tightness of criterion~\eqref{eq:entanglement-crit} for \(\cpt_2 = \id\)}
\label{app:tight}
For \(\cpt_2 = \id\), the Choi state \(E_2\) is maximally entangled and, thus, the rhs of Eq.~\eqref{eq:entanglement-crit} is maximal. 
Accordingly, the criterion is always satisfied unless the entanglement of assistance on the lhs also reaches this maximum. This is the case if and only if \(E_1\) is a mixture of maximally entangled states, which, in turn, is equivalent to \(\cpt_1\) being a MU channel~\cite{DiVFucMabSmoThaUhl1999, LauVerEnk2002, AudSch2008}.

Note that for \(\cpt_2 \neq \id\), the criterion~\eqref{eq:entanglement-crit} is, in general, only sufficient but not necessary.

\section{Relaxed witnesses}
\label{app:relaxed-witness}

Adapting results for quantum memory witnesses~\cite{YuOhsNguNim2025:p}, we can define weaker but computationally cheaper witnesses for non-MU channels.
\begin{align}
\label{eq:cheap-witness}
    w =& \min_{W,P} \tr[W E_u] + \bra{\Phi^+}P\ket{\Phi^+} \notag\\
    &\text{s.t.: }
    W\otimes \frac{\id}{d} + P^{AC'} \otimes \Phi_+^{B'B} \notag \\
    &\qquad=  Q^{AB'BC'} + (R^{AB'BC'})^{\top_{BC'}} \notag\\
    &\qquad Q,R \succeq 0, \notag\\
    &\qquad W, P \text{ Hermitian}.
\end{align}
If \(w<0\), the channel \(\cptu\) with Choi state \(E_u\) is non-MU.
A computationally even more favorable version \(w_R\) can be constructed by setting \(Q=0\) in Eq.~\eqref{eq:cheap-witness}.

In an experimental setting, only the Hermitian operator \(W\) needs to be measured in order to confirm the non-MU nature of a unital channel. No full process tomography is necessary. 
Moreover, the form of \(W\) can be constraint to only contain observables that are measurable in a given experimental setup.

\section{Non-convexity of the set of non-MU channels}
\label{app:non-convex}
\begin{proof}
    Let us consider the fully depolarizing channel $\cptd\left[\rho\right]:=\sum_{i=1}^{9}  U_i \rho U_i^\dagger /9$ given in terms of the nine unitary Kraus operators
    \begin{align}
        U_1 = \id, U_2 =  Y, U_3 =  Z, U_4 =  Y^2, U_5 =  YZ \notag\\
        U_6 =  Y^2 Z, U_7 =  YZ^2, U_8 =  Y^2 Z^2, U_9=Z^2,
    \end{align}
    where
    \begin{align}
        Y=
        \begin{pmatrix}
            0 & 1 & 0\\
            0 & 0 & 1\\
            1 & 0 & 0
        \end{pmatrix},
        \qquad
        Z=
        \begin{pmatrix}
            1 & 0 & 0\\
            0 & \omega & 0\\
            0 & 0 & \omega^2
        \end{pmatrix},
    \end{align}
    with $\omega=\exp({2 \pi \i/3})$,    which is thus easily identified as a mixed-unitary channel.
    We can, however, also describe this channel as a convex combination of non-mixed-unitary channels.
    Note that whenever $\cpt$ is non-MU, the channel $\cpt'=\cpt \circ U$ is also non-MU, if $U$ is a unitary map.
    Using the Landau-Streater channel from Eq.~\eqref{eq:ls_qutrit} as the channel $\cpt$ we can define
    \begin{align}
        \cpt_i = \cpt \circ V_i
    \end{align}
    with $V_i\left[\rho\right]:=U_i \rho U_i^\dagger$.
    The convex mixture of these $\cpt_i$ with $p_i=1/9$ yields the fully depolarizing channel
    \begin{align}
        &\sum_{i}^{9} p_i \cpt_i\left[\rho\right] = \cpt\left[\sum_i p_i U_i \rho U_i^\dagger\right]\\ &= \cpt\left[\cptd\left[\rho\right]\right] = \cptd\left[\rho\right].
    \end{align}
    The last equality uses the fact that the fully depolarizing channel $\cptd\left[\rho\right] = \id$ for all $\rho$ and that $\cpt$ is unital.
\end{proof}

\section{Details on the investigated qutrit channels}
\label{app:details}
The fully depolarizing channel of a qutrit $\cptd\left[\rho\right]={\tr{\rho}} \id/3$ maps every state to the maximally mixed state. It is known that mixing this channel with any other unital channel as in Eq.~\eqref{eq:family_ls_id} is mixed-unitary for $p\geq {7}/{8}$ in the qutrit case \cite{watrousMixingDoublyStochastic2009}.
The Landau-Streater channel \cite{kummererEssentiallyCommutativeDilations1987, LanStr1993}, which, in the qutrit case, is related to the Werner-Holevo channel \cite{KarimipourLSandWH}, is a unital qutrit channel without mixed-unitary representation. It is not only extremal in the set of unital channels but also in the set of all qutrit channels.

There are also unital qutrit channels without mixed-unitary representations which are extremal unital channels but not extremal in the set of all channels. Two known examples are the Arveson-Ohno channel $\cptao$ and a family constructed by Haagerup, Musat and Ruskai. Both have in common that they are rank-4 channels. One representation  of the Arveson-Ohno channel $\cptao$ is given by~\cite{ohnoMaximalRankExtremal2010}
\begin{align}
    K_1 &= \ketbra{0}{0}, \; &&K_2 = \ketbra{0}{1}+\sqrt{2}\ketbra{1}{2} \notag\\
    K_3 &= \sqrt{2}\ketbra{1}{0}+\sqrt{3}\ketbra{2}{1}, \; &&K_4 = \ketbra{2}{0} + \sqrt{2} \ketbra{0}{2}.
\end{align}
The family $\cpthmr(\alpha, \beta)$ is non-mixed-unitary for almost all $\alpha, \beta \in \mathbb{C}$ with $\abs{\alpha}^2+\abs{\beta}^2=1$. A Kraus representation is given by \cite{haagerupExtremePointsFactorizability2021}
\begin{align}
    K_1 &= \alpha \ketbra{0}{0}+\ketbra{1}{2}, &&K_2=\beta \ketbra{0}{2} + \ketbra{2}{1}\notag \\
    K_3 &= -\ketbra{0}{1}-\beta^* \ketbra{2}{0}, &&K_4 = \ketbra{1}{0}+\alpha^* \ketbra{2}{2}.
\end{align}
The parameters chosen for the investigation in Tab.~\ref{tab:overview_performance} are $\alpha={1}/{\sqrt{5}}$ and $\beta={2}/{\sqrt{5}}$. They are chosen such that the absolute value of $\wit$ according to the SDP Eq.~\eqref{eq:objective} is maximal.

\section{Performance of different witnesses}
\label{app:performance}

We investigated all convex combinations of the form $\cpt_p=p \cpt_A + (1-p) \cpt_B$ with respect to the property of being non-MU. The graphical representation in Fig.~\ref{fig:relations_channels} is based on the numerical results summarized in Tab.~\ref{tab:overview_performance}.
There, not only the quantum memory witness \(\wit\) from Eq.~\eqref{eq:objective} and the Mendl-Wolf witness \(s_\text{MW}\) from Eq.~\eqref{eq:mendlwolf-Z} were investigated, but also the weaker variants \(w\) and \(w_R\) of the quantum memory witness in Eq.~\eqref{eq:cheap-witness}. Those require on the one hand less computing time but are on the other hand not as exact as the full quantum memory witness. In higher dimensions this may be advantageous.

\begin{table}[ht!]
    \centering
    \begin{tabular}{cc|c|c|c|c}
         $\cpt_A$ &$\cpt_B$ & $\wit<0$ & $w<0$ & $w_R<0$ & $s_\mathrm{MW}<0$ \\
         \hline
         $\id$ & $\cptls$ & $p<1$ & $p<1$ & $p<1$ & $p<0.33$\\
         \hline
         $\cptd$ & $\cptls$ & $p<0.5$ & $p<0.5$& $p<0.02$ & $p<0.5$ \\
         \hline
         $\cptao$ & $ \cptls$ & $\forall p$ & $p<0.50$ & $p<0.08$ & $p<0.51$ \\
         \hline
         $\cpthmr$& $\cptls$ & $\forall p\setminus\left[0.89, 0.91\right]$ & $\forall p\setminus\left[0.89, 0.91\right]$ & $p<0.07$ &$p<0.63$ \\
         \hline
         $\id$& $\cptao$ & $p<0.08$ & $p<0.08$ &  $p=0$ & - \\
         \hline
         $\cptd$ & $\cptao$ & $p<0.18$ & $p<0.18$ &  $p=0$ & - \\
         \hline
         $\cpthmr$ & $\cptao$ & $\forall p\setminus\left[0.71, 0.94\right]$ & $\forall p\setminus\left[0.69, 0.96\right]$ & $p=0$ & - \\
         \hline
         $\cptd$ &$\cpthmr$ & $p<0.03$ & $p<0.02$ & - & - \\
         \hline
         $\id$ & $\cpthmr$ & $p<1$ & $p<1$ & - & - \\
    \end{tabular}
    \caption{Overview of the performance of the different non-MU witnesses in the convex combinations of the five selected unital qutrit channels. The investigated families take the form $\cpt_p = p \cpt_A + (1-p) \cpt_B$ and the cells of the table contain the parameter ranges of $p$ for which $\cpt_p$ being non-MU can be certified with the according witness.}
    \label{tab:overview_performance}
\end{table}

\end{document}